\newtheorem{prop}{Proposition}
\theoremstyle{plain}
\newtheorem{theorem}{Theorem}[section]
\theoremstyle{definition}
\newtheorem{definition}[theorem]{Definition}
\theoremstyle{remark}
\DeclareMathOperator{\rgt}{\textbf{rgt}}
\DeclareMathOperator{\maxrgtlv}{\ell_{\rgt}}
\newcommand{\ourmethod}{\texttt{SPACE\,}}
\title{Statistically Truthful Auctions via Acceptance Rule}
\author[1]{Roy Maor Lotan}
\author[2]{Inbal~Talgam-Cohen}
\author[1,2]{Yaniv Romano}
\affil[1]{Department of Electrical and Computer Engineering, Technion IIT, Israel}
\affil[2]{Department of Computer Science, Technion IIT, Israel}
\date{}
\begin{document}

\maketitle

\begin{abstract}
Auctions are key for maximizing sellers' revenue and ensuring truthful bidding among buyers. Recently, an approach known as differentiable economics based on machine learning (ML) has shown promise in learning powerful auction mechanisms for multiple items and participants. However, this approach has no guarantee of strategy-proofness at test time. Strategy-proofness is crucial as it ensures that buyers are incentivized to bid their true valuations, leading to optimal and fair auction outcomes without the risk of manipulation. In this work, we propose a formulation of statistical strategy-proofness for auction mechanisms. Specifically, we offer a method that bounds the regret---quantifying deviation from truthful bidding---below a pre-specified level with high probability. Building upon conformal prediction techniques, we develop an auction acceptance rule that leverages regret predictions to guarantee that the data-driven auction mechanism meets the statistical strategy-proofness requirement with high probability. Our method---\emph{\textbf{S}tatistically \textbf{T}ruthful Auctions via \textbf{A}cceptance \textbf{R}ule (\ourmethod)}---represents a practical middle-ground between two extremes: enforcing truthfulness---zero-regret---at the cost of significant revenue loss, and naively using ML to construct auctions with the hope of attaining low regret, with no test-time guarantees.
\end{abstract}

\begin{figure*}[ht]
    \centering
    \includegraphics[width=0.95\linewidth]{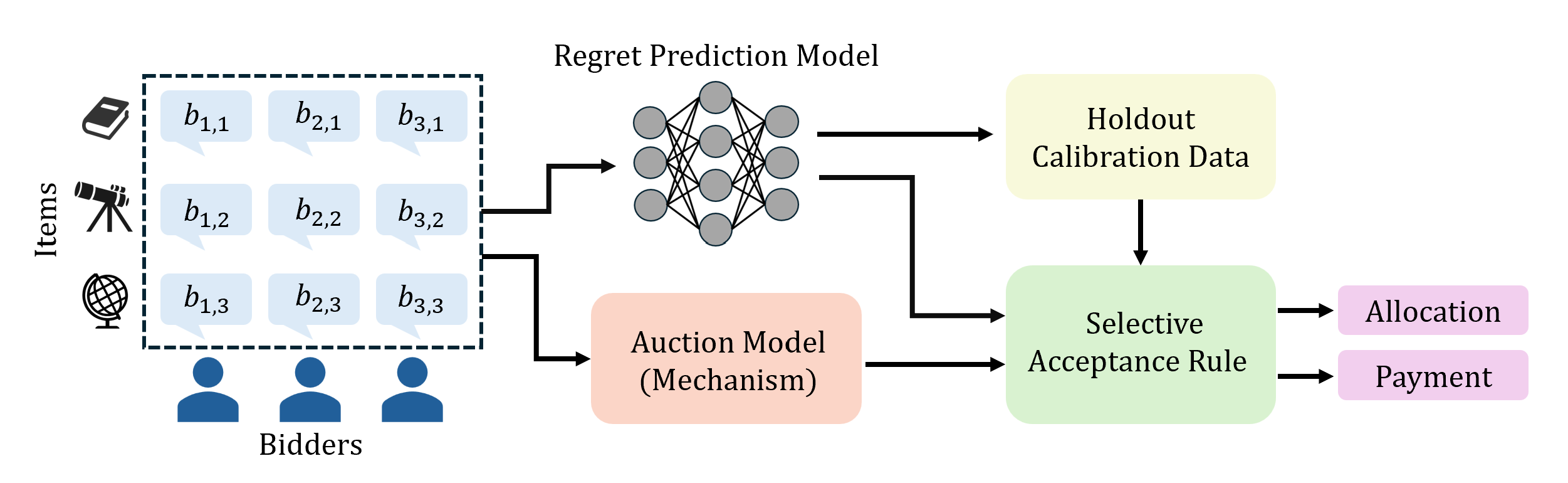}
    \caption{
    Overview of \emph{Statistically Proofed Auctions via Conformal Estimation (\ourmethod)}. 
    Bidders submit bids for multiple items, forming a bid matrix. An auction model (mechanism) processes these bids to determine allocation and payment. A regret model estimates the incentive to deviate from truthful bidding. Using regret predictions and holdout calibration data, a selective acceptance rule ensures statistical strategy-proofness, accepting only outcomes with low estimated regret. This balances revenue and incentive compatibility.
    }
    \label{fig:method_illustration}
\end{figure*}

\section{Introduction}\label{sec:intro}
Auctions are a core tool of economic theory and practice, allocating valuable 
goods and generating billions across markets such as online advertising and procurement; 
notable implementations include sponsored search (e.g., Google) and eBay auctions \citep{milgrom2004putting, krishna2009auction, cramton1997fcc, sandholm2007expressive, leyton2017economics, milgrom2017discovering, roth2018marketplaces, edelman2007internet}. In the standard \emph{independent private valuations} model, each bidder has a \emph{valuation function} over subsets of items drawn independently, possibly from distinct distributions, the auctioneer knows these distributions, yet valuations remain private—creating incentives to misreport and complicating design.

Optimal mechanisms are known only in several special cases such as Myerson’s setting---single-item multi-bidder auctions~\citep{Myerson81}. For multiple items with a single bidder, much of the landscape is characterized \citep{Manelli2006Bundling,Pavlov2011Optimal,Daskalakis2017Strong,Kash2016Optimal}. For multiple items and multiple bidders, optimal mechanisms are known only in simplified settings \citep{Yao2017Dominant}, and their characterization
remains open even with a few bidders and several items. Leveraging the predictive power of deep neural networks (DNN), \citet{dutting2019optimal} and follow-ups \citep{Dutting0NPR24, duan2024scalable, rahme2021permutation, rahme2021auction, ivanov2022optimal, peri2021preferencenet,curry2022differentiable, wang2024gemnetmenubasedstrategyproofmultibidder, shen2019automated} propose data-driven mechanisms that, while lacking guarantees of revenue optimality or full strategy-proofness, empirically approximate known optimal designs when these are known. 

\subsection{The Challenge of Strategy-Proofness in Data-Driven Auctions}

Strategy-proofness is tightly linked to regret in auctions: regret measures the incentive of bidders to deviate from the truth, so a mechanism with low regret leaves bidders little reason to misreport their true valuations. Put differently, low regret promotes truthful bidding, yielding at the limit a strategy-proof auction mechanism
\citep{DuettingFJLLP12,milgrom2004putting, Myerson81}. The practical stakes are clear: lower regret supports greater trust in the mechanism, since participants can be confident that bidding truthfully is roughly in their best interest, and thus enables the auctioneer to guarantee higher revenue and more efficient allocation of resources across bidders \citep{krishna2009auction}.

A critical deployment issue for many data-driven---DNN-based---mechanisms is the absence of statistical guarantees at test time for new auctions. Although these models are trained to minimize regret on average and often achieve low-regret outcomes, they can still output allocations and payments that severely violate the low-regret requirement on unseen instances. This limitation is illustrated in Figure~\ref{fig:rgt_density}: the well-known RegretNet mechanism \citep{dutting2019optimal} does not consistently attain low regret, with a visible tail of high-regret outcomes (the red zone). While average regret may be small across many scenarios, the tail behavior persists. Figure~\ref{fig:rgt_vs_num_auc} emphasizes this point by showing that the maximum test-time regret rises as more auctions are evaluated, suggesting diminished reliability as the model encounters a broader range of auction contexts.

\begin{figure}[ht]
    \centering
    \begin{subfigure}[b]{0.45\textwidth}
        \centering
        \includegraphics[height=4.55cm]{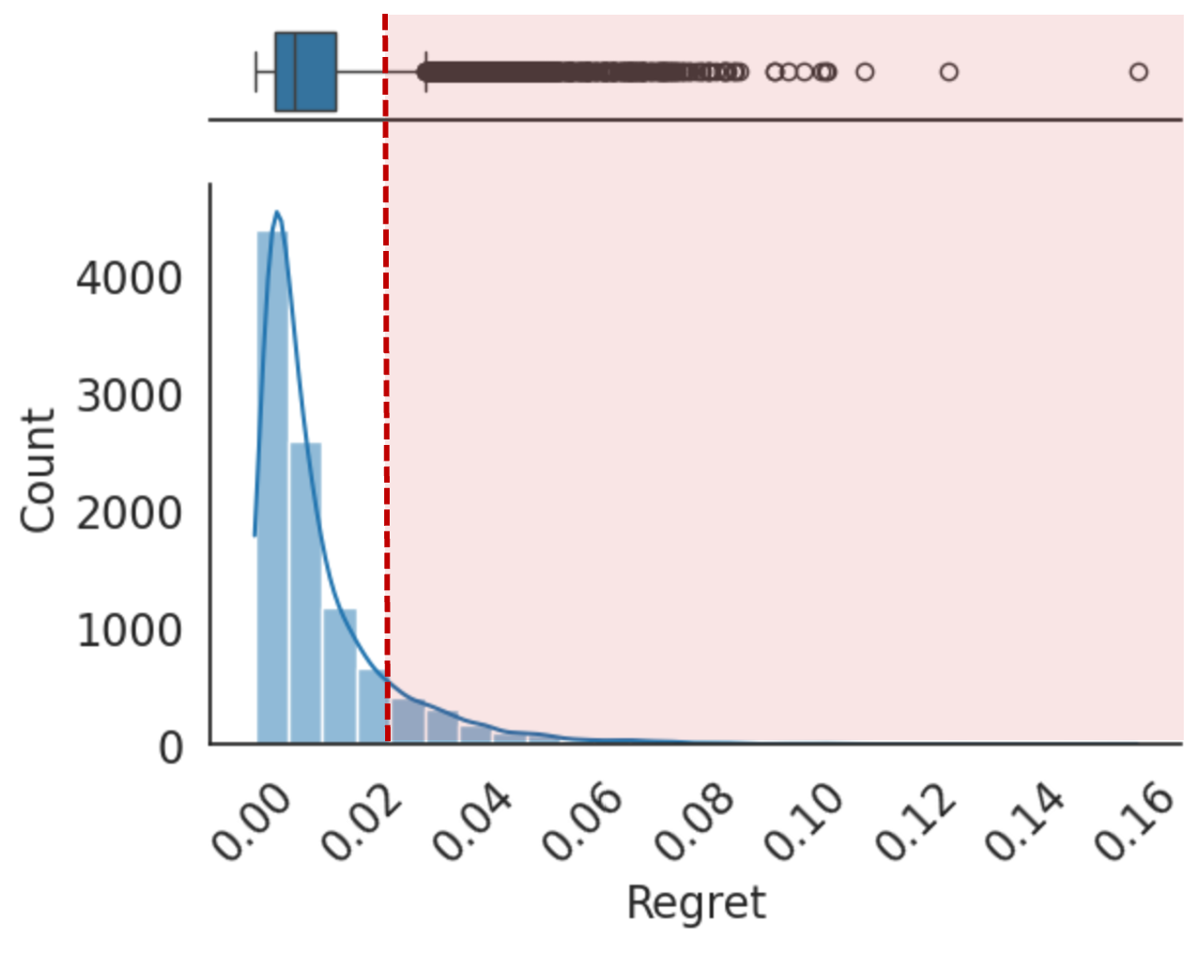}
        \caption{Empirical distribution of actual regret values of \texttt{RegretNet} model, evaluated on 10,000 test auctions.}
        \label{fig:rgt_density}
    \end{subfigure}
    \hfill
    \begin{subfigure}[b]{0.45\textwidth}
        \centering
        \includegraphics[height=4.52cm]{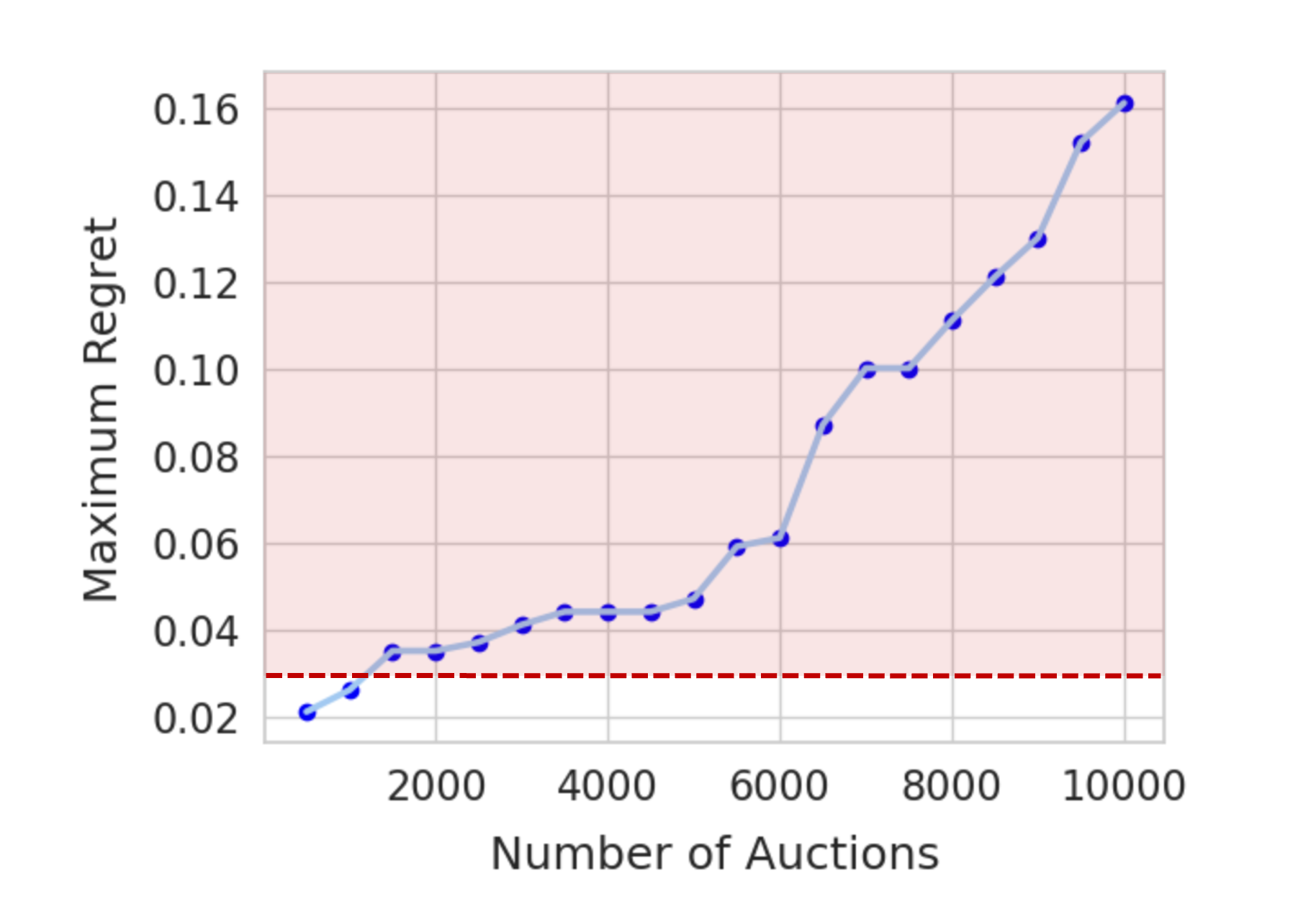}
        \caption{Maximum regret value obtained by \texttt{RegretNet} model as a function of the number of test auctions.}
        \label{fig:rgt_vs_num_auc}
    \end{subfigure}
    \caption{Regret behavior of \texttt{RegretNet} in the $2$-bidder, $3$-item auction setting.
    For additional details on the experimental configuration of this model, please refer to Section~\ref{sec:exp_setup}.}
    \label{fig:regret_combined}
\end{figure}

\vspace{-0.5em}

The discussion above reveals a significant limitation of cutting-edge data-driven auction design: there is no assurance that the regret will remain low with any level of confidence once the model is deployed in real-world scenarios. This undesired behavior undermines the trustworthiness of the auction system, and could lead to inefficiency, substantial revenue losses for the auctioneer, and unfair outcomes for the bidders. The primary goal of this paper is to tackle this challenge by selectively accepting auctions for which the unknown test-time regret is unlikely to exceed a desired maximum regret level.

\subsection{Related Work}
Given the theoretical and practical importance of strategy-proofness, there has been growing activity in enhancing the reliability of data-driven auctions. \citet{curry2020certifying} modify the model proposed by~\citep{dutting2019optimal} to enable certifiable strategy-proofness using mixed integer programming. However, the certification procedure imposes strict constraints both on the architecture design and the number of trainable parameters---as it involves solving a large mixed integer programming problem---limiting the model's performance. By contrast, as we will see, the framework we propose in this paper can be applied to virtually any auction model, regardless of its complexity and architectural design.

A different line of work by \citet{peri2021preferencenet,rahme2021auction,rahme2021permutation,ivanov2022optimal} offers novel learning approaches and architectures to formulate actions with low average regret. This stands in striking contrast with our method, which achieves a high probability bound on the worst-case, maximal regret obtained at test time.

Another line of work for forming reliable data-driven auctions builds on the parameterization of allocation menus that are guaranteed to be strategy-proof. This area was recently advanced by \citet{curry2022differentiable} and \citet{duan2024scalable}, who optimize the parameters of affine maximizer auctions (AMA)~\citep{AMA}. While AMA-based approaches are strategy-proof, they often yield lower revenue than standard DNN-based methods. Indeed, in Section \ref{sec:experiments}, we show that the approach by \citep{duan2024scalable} results in a revenue loss of about 45\% compared to our method. This emphasizes the hardness of attaining exact strategy-proof auctions while constructing data-driven auctions with high revenue.

\subsection{Overview of Contributions}
The challenge of forming guaranteed low-regret auctions with high revenue drives us to propose a new, statistical approach for constructing reliable data-driven auctions. Our key contribution is the introduction of an auction acceptance rule that selectively rejects predicted auctions whose regret exceeds a pre-defined threshold. In turn, the advantage of our approach is that it guarantees, with high probability (e.g., 99\%), that accepted auctions at test time do not exceed a pre-defined max regret value (e.g., of level 0.05).

Our approach consists of three key components. First, we present a novel, data-driven regret model to estimate the regret of an out-of-sample auction. We then build on conformal prediction to form a rigorous acceptance rule, ensuring that only auctions meeting the desired strategy-proofness level are accepted. Under the i.i.d. assumption, we prove that our approach controls the maximum regret level at test time, for any complex auction model, any unknown bidding distribution profiles, and regardless of the accuracy of the regret estimation model used to form the acceptance rule.

As such, our work introduces a new practical paradigm for balancing the trade-off between strategy-proofness and revenue discussed above. Our approach can be seen as a synthesis of two extremes: (i) enforcing zero regret at the expense of significant revenue loss, and (ii) naively using machine learning to design auctions with the hope of achieving low regret at test time. Indeed, our experiments show that the accepted auctions obtained by our approach are of low regret while having revenue nearly the same as the baseline auction model by \cite{dutting2019optimal}. As illustrated in Figure~\ref{fig:revenue_comparison}, our statistically strategy-proof mechanism achieves a favorable balance between revenue and strategy-proofness, outperforming traditional strategy-proof models in terms of revenue while maintaining controlled regret.

\begin{figure}[ht]
    \centering
    \includegraphics[height=4.52cm]{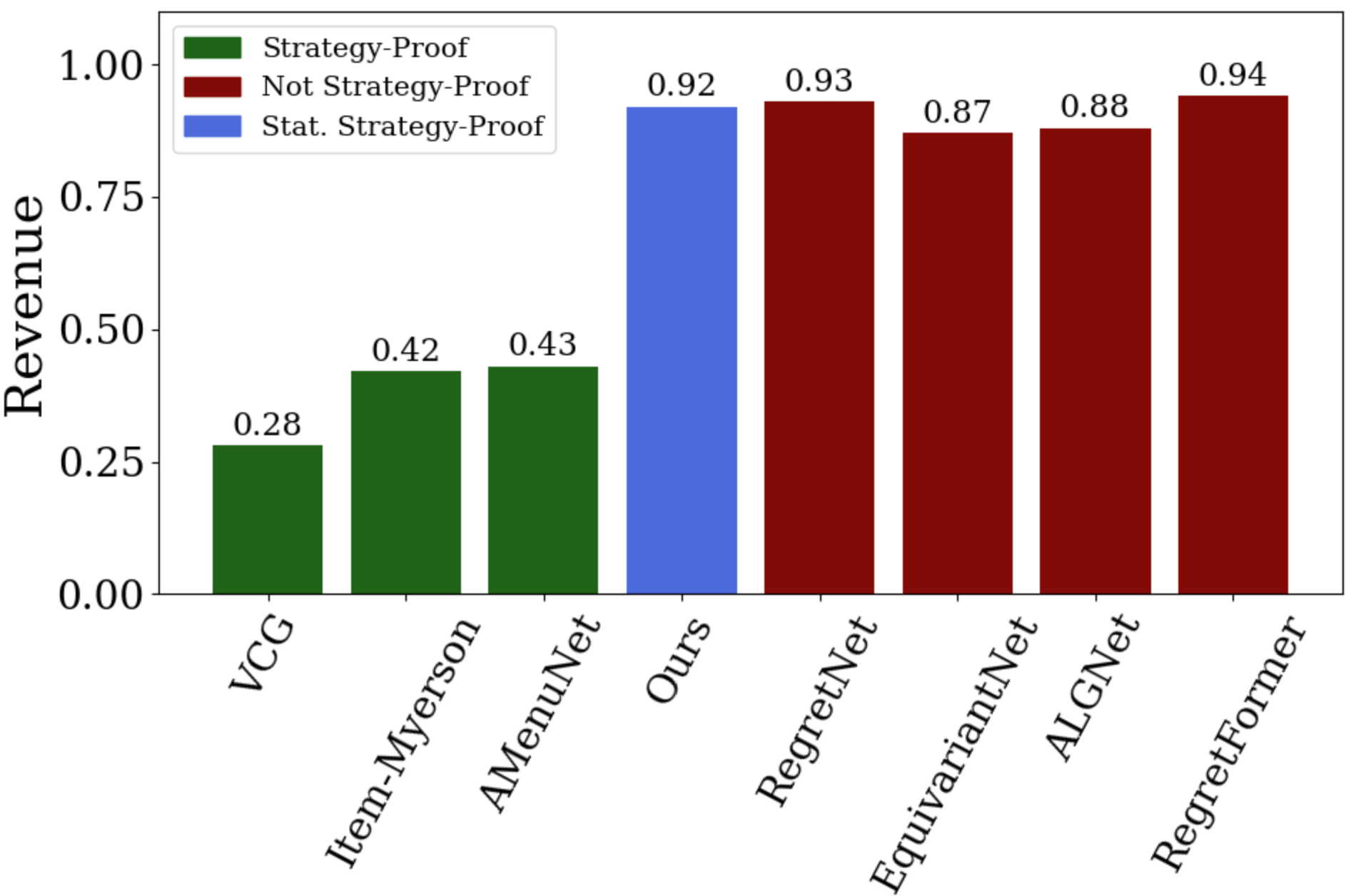}
    \caption{Comparison of revenue across different auction mechanisms. Strategy-proof methods (green) ensure truthful bidding, but yield lower revenue. Non-strategy-proof methods (red) achieve higher revenue at the cost of truthfulness. Our proposed statistically strategy-proof mechanism (blue) balances revenue and strategy-proofness. See Table \ref{tab:results_rev} for references.}
    \label{fig:revenue_comparison}
\end{figure}

\section{Background}
\label{sec:background}

\subsection{Problem Setup}
\label{sec:problem_setup}
We consider an auction environment, comprising of a set of $n$ bidders, denoted by $[n] = \{1, \ldots, n\}$, and a set of $m$ items, denoted by $[m] = \{1, \ldots, m\}$. Each bidder $i\in [n]$ possesses a private valuation $v^i \in \mathbb{R}^m$ which maps item sets $S\subseteq [m]$ to their value $v^i(S)\in \mathbb{R}_{\ge 0}$ for the bidder. The valuation matrix $v = [v^1, \ldots, v^n]$ is assumed to be drawn from a distribution $P$.

In an auction, participants submit bids $b^i\in \mathbb{R}^m_{\ge 0}$ that may not necessarily represent their true valuations. 
The bid profile of all bidders is represented by $b \in \mathbb{R}_{\geq 0}^{n \times m}$. 
The auction mechanism, denoted by $f(b) = [g(b), p(b)],$ comprises of an allocation rule $g(b)$ and a payment rule $p(b)$. Specifically, the allocation rule is defined as $g: \mathbb{R}_{\geq 0}^{n \times m} \to [0,1]^{n \times m},$
determining how the $m$ items are assigned to the $n$ bidders. The payment rule is given by $p: \mathbb{R}_{\geq 0}^{n \times m} \to \mathbb{R}_{\geq 0}^{n \times m},$ specifying the amount each bidder must pay for each item.
The primary goal of the auction mechanism is to allocate the items while setting payments, often with the objective of maximizing revenue.

To effectively frame strategy-proof auctions as a learning task, we shall design a DNN trained to minimize the regret while maximizing the bidders' utility. To this end, we denote the trained auction mechanism by  ${f}_{\hat{\theta}}(b) = [{g}_{\hat{\theta}}(b), {p}_{\hat{\theta}} (b)]$, where the vector $\theta$ represents trainable parameters. 
Here, the `hat' symbol over the parameters $\hat{\theta}$ indicates that these are fixed, i.e., the model is pre-trained. Using the notations from \citep{dutting2019optimal}, the utility of bidder $i$, denoted by $u^i_{\hat \theta}(v^i; b)$, is defined as the difference between their valuation of the allocated items and the payment made, i.e., 
\begin{equation}
u_{\hat{\theta}}^{i}(v^i;b) = v^i({g}^{i}_{\hat{\theta}}(b)) - {p}^{i}_{\hat{\theta}}(b).
\end{equation}
Above, $g^i_{\hat \theta}(b) \in \mathbb{R}$ and $p^i_{\hat \theta}(b)\in \mathbb{R}$ represent the allocation and payment for bidder $i$ given the bid profile~$b$. 

Using this utility function, the regret of bidder $i$ is defined as the difference in utility between bidding truthfully and bidding to maximize utility. Let $v^{-i}$ denote the valuation profile excluding bidder $i$. Then, the regret function of the $i$-th bidder is given by
\begin{equation}
    \label{def:regret}
    \rgt_{\hat{\theta}}^i(v;b) = 
    \max_{b^i \in b} u_{\hat{\theta}}^{i}(v^i;(b^i, v^{-i})) - u_{\hat{\theta}}^{i}(v^i;(v^i, v^{-i})).
\end{equation}

With valuations function at hand, choose mechanism parameters---train a DNN model---to maximize expected revenue, minimize the regret (strategy-proofness), subject to individual rationality, and feasibility. 
For further details, see \citet{dutting2019optimal}.

A notable feature of data-driven auctions is the prevalence of rejection, where the auction mechanism chooses not to allocate any items---formally denoted as allocation $A_{0}$ \cite{baisa2017auction, milgrom2013deferred, shui2023rejection}. Such no-allocation outcomes are common in practice, particularly in online platforms like ad auctions or marketplaces, where not every auction leads to a match between supply and demand. Rejection serves multiple purposes: it avoids inefficient allocations when the bids do not meet a reserve price or quality threshold, preserves fairness among participants, and can be used strategically to maintain long-term revenue or user experience. 

\subsection{Conformal Prediction}
\label{sec:conformal_prediction}
We now present the conformal prediction method, which we employ for controlling max regret levels within data-driven auction mechanisms. 

Conformal prediction~\citep{papadopoulos2002inductive,vovk2005algorithmic,lei2014distribution} offers a generic approach, applicable to any predictive model, for quantifying prediction uncertainty. In recent years, conformal prediction has increasingly gained traction across various domains, demonstrating its broad utility and effectiveness in practical settings \citep{angelopoulos2021gentle}. This method utilizes a subset of holdout \emph{calibration data} to quantify the prediction error of the model. These errors are then used to construct a range of plausible predictions---a \emph{prediction interval}---that covers the unknown test outcome with high probability. 

In what follows, we describe in more detail how to construct prediction intervals for regression problems, as this task is tightly connected to our goal of reliably predicting unknown regret values at test time. Let $x\in\mathcal{X}$ represent the input features to the model, $y\in\mathcal{Y}$ denote the true output, and ${f}_{\hat{\theta}}(x)$ be the predicted output generated by the model ${f}_{\hat{\theta}}$ parametrized by ${\hat{\theta}}$. The process of conformal prediction for general inputs and outputs can be outlined as follows. First, we apply ${f}_{\hat{\theta}}$ on the labeled calibration data $\mathcal{D}_{\textup{cal}} = \{(x_k, y_k)\}_{k=1}^K$ to construct a heuristic notion of prediction error for each calibration point $(x_k, y_k)$. We refer to this notion of error as a nonconformity score, denoted by $s(x,y)\in\mathbb{R}$. The next step involves calculating $\hat{q}_{\alpha}$, the $(1-\alpha)(1+1/K)$ empirical quantile of the calibration scores $s(x_k,y_k)$, $k=1,\dots,K$. This quantile is used to construct a prediction interval for a new test point $x_{\textup{test}}$, which includes all the output values $\tilde{y}\in\mathcal{Y}$ whose corresponding nonconformity scores fall below the quantile $\hat{q}_\alpha$, i.e. 
\begin{equation}
    \label{eq:conformal_set}
    C(x_{\textup{test}}) = \{\tilde{y}\in\mathcal{Y} : s(x_{\textup{test}},\tilde{y}) \leq \hat{q}_\alpha\}.
\end{equation}

The key property of $C(x_{\textup{test}})$ is that it is guaranteed to cover the unknown test label $y_\textup{test}$ at the desired level of coverage $1-\alpha$, as formally stated in Theorem~\ref{thm:conformal_calibration}. 
\begin{theorem}[Conformal Coverage Guarantee; Vovk, Gammerman, and Saunders~\citeyear{vovk1999machine}]
    \label{thm:conformal_calibration}
    Assume that the calibration set ${\mathcal{D}_\textup{cal} = \{x_i, y_i\}}_{k = 1}^{K}$ and the test point $(x_\textup{test}, y_\textup{test})$ are i.i.d. samples from $P_{x,y}$. Then, the following coverage guarantee holds for $\mathcal{C}(x_\textup{test})$ defined in \eqref{eq:conformal_set}:
    \begin{equation}
    \mathbb{P} \Big[ y_\textup{test} \in \mathcal{C}(x_{\rm test}) \Big] \ge 1 - \alpha.
    \end{equation}
\end{theorem}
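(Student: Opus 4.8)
The plan is to reduce the coverage event to a statement about the \emph{rank} of the test nonconformity score among the calibration scores, and then exploit exchangeability. First I would set $s_i := s(x_i, y_i)$ for $i = 1, \dots, n$ and $s_{n+1} := s(x_{\rm test}, y_{\rm test})$, and record that, because the $n$ calibration pairs and the test pair are i.i.d., the vector $(s_1, \dots, s_n, s_{n+1})$ is exchangeable. I would also pin down the definition of $\hat q$, which the text leaves slightly informal: $\hat q$ is the $\lceil (1-\alpha)(n+1)\rceil$-th smallest value among $s_1, \dots, s_n$ (equivalently the empirical $\lceil (1-\alpha)(n+1)\rceil / n$ quantile), with the convention $\hat q = +\infty$ when $\lceil (1-\alpha)(n+1)\rceil > n$.

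Next I would make the elementary but essential observation that, by construction of the prediction set, $\mathcal{I}(x_{\rm test}) = \{\, y : s(x_{\rm test}, y) \le \hat q \,\}$, so the coverage event $\{ y_{\rm test} \in \mathcal{I}(x_{\rm test}) \}$ is exactly the event $\{ s_{n+1} \le \hat q \}$. It therefore suffices to show $\mathbb{P}[\, s_{n+1} \le \hat q \,] \ge 1 - \alpha$.

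The heart of the argument is the rank computation. I would note that $\{ s_{n+1} \le \hat q \}$ holds whenever $s_{n+1}$ is among the $\lceil (1-\alpha)(n+1)\rceil$ smallest elements of the full collection $\{ s_1, \dots, s_{n+1} \}$, i.e. whenever the rank of $s_{n+1}$ within this collection is at most $k := \lceil (1-\alpha)(n+1)\rceil$. Assuming for the moment that the scores are almost surely distinct, exchangeability forces the rank of $s_{n+1}$ to be uniform on $\{1, \dots, n+1\}$, whence $\mathbb{P}[\mathrm{rank}(s_{n+1}) \le k] = k/(n+1) \ge (1-\alpha)(n+1)/(n+1) = 1-\alpha$, which closes the proof in the tie-free case.

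The main obstacle will be handling ties among the scores, where the rank is no longer exactly uniform. I would resolve this either by adopting a fixed tie-breaking convention and verifying that the inequality $\mathbb{P}[\mathrm{rank}(s_{n+1}) \le k] \ge k/(n+1)$ still holds (it can only become slack in our favor), or, more cleanly, by introducing an independent uniform auxiliary variable to break ties, which preserves exchangeability of the augmented scores and reduces the situation to the distinct case. I would also dispatch the degenerate case $\lceil (1-\alpha)(n+1)\rceil > n$, where $\hat q = +\infty$ and coverage holds trivially. This is precisely the standard split-conformal argument; the tie handling is the only genuinely delicate point.
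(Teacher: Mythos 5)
The paper does not actually prove this theorem: it is imported from the conformal prediction literature with a citation to Vovk, Gammerman, and Saunders, and is invoked only as a black box in the final step of the proof of Theorem~\ref{thm:regret_calibration}. Your argument is the standard exchangeability--rank proof of precisely this imported result, and it is correct and complete: you reduce coverage to the event that the test score's rank among all $n+1$ scores is at most $k=\lceil(1-\alpha)(n+1)\rceil$, use uniformity of that rank under exchangeability to get $k/(n+1)\ge 1-\alpha$, and you correctly flag and dispatch the only delicate points (ties and the degenerate case $\hat q=+\infty$). Your pinned-down definition of $\hat q$ as the $\lceil(1-\alpha)(n+1)\rceil$-th smallest calibration score is also consistent with the index $k_\alpha$ used in Algorithm~\ref{alg:regret_calibration}, so nothing in your write-up conflicts with how the paper uses the theorem.
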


Notably, the standard conformal prediction approach cannot be applied ``as is'' in auction settings. This is because the regret value, which we aim to control, depends on the user's bids $b$, valuations $v$, and the allocation and payment determined by the auction model ${f}_{\hat{\theta}}$. In other words, the ``label'' $y$ is now a complex function of both the model and the data, requiring a customized adaptation of conformal prediction to fit this auction context---this is the focus of our work.

\section{Proposed Method}
\label{sec:methods}
In this section, we present our approach to forming data-driven auctions with controlled maximal regret at test time. Given a test bidding matrix $b_{\textup{test}}$ and an unknown valuation matrix $v_{\textup{test}}$, we define the maximal regret as follows:
\begin{equation}
\rgt_{\hat{\theta}}^{\textup{max}}(v_\textup{test};b_\textup{test}) = \max_{i \in [n]}  \ \rgt^{i}_{\hat{\theta}}(v_\textup{test};b_\textup{test}), 
\end{equation}
where $\rgt^{i}_{\hat{\theta}}$ is defined in \eqref{def:regret}. 

We also define the calibration data set $\mathcal{D}_\textup{cal} = \{( b_k, v_k) \}_{k=1}^K$, which contains holdout pairs of bidding $b_k \in \mathbb{R}^{n \times m}$ and valuation $v_k \in \mathbb{R}^{n \times m}$ matrices that correspond to the $k$-th instance. Throughout this paper, we assume that the calibration samples $( b_k, v_k), k=1,\ldots,K$ and the test sample $(b_\textup{test}, v_\textup{test})$ are drawn i.i.d. from some unknown distribution $P_{b,v}$. 

Our approach consists of two key components. First, we develop a data-driven regret model, whose goal is to predict the unknown maximum regret of a new auction. Next, we show how to harness this regret estimator to construct an auction acceptance rule that accepts test actions that satisfy the following definition.
\begin{definition}[Statistically Strategy-Proof Auctions] \label{def:statistical_SP}
An auction mechanism ${f}_{\hat{\theta}}(b_\textup{test})=[{g}_{\hat{\theta}}(b_\textup{test}), {p}_{\hat{\theta}}(b_\textup{test})]$ is said to be {statistically strategy-proof} at a user-specified confidence level $(1-\alpha) \in [1,0)$ and maximal regret level $\maxrgtlv$ if 
\begin{equation}
\label{eq:statistically_strategy_proof_def}
\mathbb{P}\left[ \textup{\textbf{rgt}}^{\textup{max}}_{\hat{\theta}}(v_\textup{test};b_\textup{test}) > \maxrgtlv \right] \leq\alpha,
\end{equation}
where the probability is taken over calibration and test points.
\end{definition}

In words, an auction outcome that satisfies Definition \ref{def:statistical_SP} is guaranteed to have low maximum regret with high probability, where the maximal regret value $\maxrgtlv$ and the confidence level $1-\alpha$ are pre-defined by the auctioneer. Since we can control the maximal regret value with high probability, we refer to it as a \emph{statistically strategy-proof} auction mechanism. Notably, this statistical notion of strategy-proofness does not imply zero regret at test time.

\subsection{Regret Estimation}
\label{sec:regret_estimation}
Since the regret of a new auction is unknown, we should introduce a \emph{regret estimation model}. In its most general form, we formulate it as a neural network ${r}_{{\hat{\theta}}_r}(b)$, parameterized by ${\hat{\theta}}_r$; this model predicts $$\rgt_{\hat{\theta}}(v;b) = [\rgt^1_{\hat{\theta}}(v;b), \rgt^2_{\hat{\theta}}(v;b), \ldots, \rgt^n_{\hat{\theta}}(v;b)],$$
the outcome regret for a given auction. To fit the regret model on the training set $\mathcal{D}_\textup{train}$, 
we minimize the following objective function:
\begin{equation}
    \begin{aligned}
        \min_{\theta_r \in \mathbb{R}^d} \sum_{(b,v)\in \mathcal{D}_\textup{train}} \| r_{\theta_r}(b) - \rgt_{\hat{\theta}}(v;b) \|_1. 
    \end{aligned}
\end{equation}

We propose two different approaches to design the regret estimation model ${r}_{{\hat{\theta}}_r}(b)$. In the \emph{shared backbone setup}, the regret model is integrated into the existing allocation and payment backbone, providing a unified structure. The advantage of this approach is that it leverages the backbone features of the auction model to better predict the regret. However, this setup requires access to the auction model parameters and modifies the training procedure. In the second \emph{black-box} approach, the auction model is treated as a black-box, avoiding the need to access the auction model's internal parameters. Illustrations of both setups are provided in Appendix~\ref{asec:regret_model_illustration}.

\subsection{Model Calibration}
\label{sec:model_calibration}
In this section, we introduce a data-driven auction acceptance rule that accepts auctions with regret outcomes below $\maxrgtlv$---the maximum regret level specified by the auctioneer. A naive approach would be to accept a new auction ${f}_{\hat{\theta}}(b_{\text{test}})$ if it satisfies ${r}_{\hat{\theta}_r}^\textup{max}(b_{\text{test}}) \leq \maxrgtlv$, where  ${r}_{\hat{\theta}_r}^\textup{max}(b_{\textup{test}}) = \max_{i \in [n]} {r}_{\hat{\theta}_r}^\textup{i}(b_{\textup{test}}) $. In plain words, the auction is accepted if its estimated maximal predicted regret falls below the threshold $\maxrgtlv$. However, the key issue with this naive approach is that the regret estimator may be inaccurate, potentially leading to the acceptance of auctions whose actual regret $\rgt^{\textup{max}}_{\hat{\theta}}(v_{\text{test}};b_{\text{test}})$ exceeds $\maxrgtlv$. This emphasizes the importance of the calibration procedure presented in this section, which rigorously bridges the gap between the estimated maximal regret ${r}^{\textup{max}}_{\hat{\theta}_r}(b_{\text{test}})$ and the actual, unknown maximal regret $\rgt^{\textup{max}}_{\hat{\theta}}(v_{\text{test}};b_{\text{test}})$. 

In essence, we show how to form a corrected threshold on the output ${r}^\textup{max}_{\hat{\theta}_r}(b_{\text{test}})$, ensuring the regret of the auction outcome obtained from the auction model does not exceed the requested maximal regret level $\maxrgtlv$ with probability $1-\alpha$, resulting in a calibrated auction acceptance rule function. Given the auction model ${f}_{\hat{\theta}}$, a calibration set $\mathcal{D}_\textup{cal}$, a requested regret level $\maxrgtlv$, and a control level $\alpha$, we define the auction acceptance rule function $\hat{\mathcal{T}}_{\alpha}(b_\textup{test}, \maxrgtlv)$ for a test bid vector $b_{\text{test}}$ as follows: 

\begin{equation}
\label{eq:post_proccess_regret}
\hat{\mathcal{T}}_{\alpha}(b_{\text{test}},\maxrgtlv) =
    \begin{cases}
        {f}_{\hat{\theta}}(b_{\text{test}}), &  {r}^{\textup{max}}_{\hat{\theta}_r}(b_{\text{test}}) \leq \maxrgtlv - \hat{q}_{\alpha}, \\
        \;\;\;A_{0}, & \text{otherwise}.
    \end{cases}
\end{equation}

The function $\hat{\mathcal{T}}_{\alpha}(b_{\text{test}},\maxrgtlv)$ from above is determined by the estimated maximum regret ${r}^\textup{max}_{\hat{\theta}_r}(b_{\text{test}})$ and the calibration threshold $\hat{q}_{\alpha}$, which is derived using the calibration set $\mathcal{D}_\textup{cal}$. The output of the function $\hat{\mathcal{T}}_{\alpha}(b_{\text{test}}, \maxrgtlv)$ is the prediction of the auction model ${f}_{\hat \theta}({b_{\text{test}}})$ if ${r}^\textup{max}_{\hat{\theta}_r}(b_{\text{test}})$ is lower than the difference between the requested regret level $\maxrgtlv$, and the correction factor $\hat{q}_{\alpha}$ derived from our calibration method.  
Otherwise, the function $\hat{\mathcal{T}}_{\alpha}$ returns no-allocation rule $A_0$, indicating that the predicted auction does not meet the desired regret level. This acceptance rule ensures that the auction mechanism only operates within the predefined regret bounds.

Algorithm \ref{alg:regret_calibration} details the calibration process. The procedure starts by computing the regret scores $e_k = \rgt^{\textup{max}}_{\hat \theta}(v_k;b_k) - {r}^{\textup{max}}_{\hat{\theta}_r}(b_k)$, defined as the difference between the actual and the estimated regrets for each bidding $b_k$ and valuation $v_k$ pair of the calibration dataset. Then, we compute the empirical quantile $\hat{q}_{\alpha}$ of the regret scores $e_k$, $k=1,\ldots, K$. This quantile forms our threshold, which is used to set the corrected cutoff in the acceptance rule $\hat{\mathcal{T}}_{\alpha}$ from \eqref{eq:post_proccess_regret}. That is, $\hat{q}_{\alpha}$ rigorously accounts for the prediction error of the regret estimation model ${r}_{\hat{\theta}_r}$.

\begin{algorithm}[]
\caption{Regret Calibration}
\label{alg:regret_calibration}
\begin{algorithmic}[1]
    \State \textbf{Input:} ${f}_{\hat{\theta}}$ --  pre-trained auction model; $r_{\hat{\theta}_r}$ -- pre-trained regret model; $\maxrgtlv$ -- max regret level; $(1-\alpha)$ -- requested confidence level; $\mathcal{D}_\textup{cal}$ -- holdout calibration set.
    \State Initialize an empty list $R$ to store regret scores.
    \For{ $(b_k,v_k)$ in $\mathcal{D}_\textup{cal}$}
        \State Compute the estimated max regret ${r}_{\hat{\theta}_r}^\textup{max}(b_k)$.
        \State Compute the actual max regret $\rgt_{\hat{\theta}}^\textup{max}(v_k;b_k)$.
        \State Calculate $e_k = \rgt_{\hat{\theta}}^\textup{max}(v_k;b_k) - {r}_{\hat{\theta}_r}^\textup{max}(b_k)$.
        \State Append $e_k$ to $R$.
    \EndFor
    \State Compute  $\hat{q}_{\alpha}:=$ the $\lceil (1-\alpha)(|\mathcal{D}_\textup{cal}|+1)\rceil$-th smallest element in $\{e_i \in R\}\cup\{\infty\}$.
    \State \Return Calibrated acceptance rule function $\hat{\mathcal{T}}_{\alpha}(b, \maxrgtlv)$, defined in \eqref{eq:post_proccess_regret} and implemented with $\hat{q}_{\alpha}$.
\end{algorithmic}
\end{algorithm}

\begin{table*}[]
    \caption{Summary of the average revenue and empirical regret, with their standard deviations in parentheses, for both \texttt{RegretNet} and \ourmethod across $2\times2$, $2\times3$, and $3\times5$ auction settings. The test and the calibration sets are of size 1,000 examples. The `Requested Max Regret Level` column indicates the requested regret used in \ourmethod to control the regret at $99\%$ confidence ($\alpha = 0.01$). The `Accepted Auctions` column shows the percentage of accepted auctions by \ourmethod based on the regret threshold. the `Max Regret` column refers the max regret obtained at test time.\\}
    \centering
    \begin{tabular}{lcccccc}
        \toprule
         \makecell{Auction Model} & 
         \makecell{Auction Setup\\ Bidders × Items} &  
         \makecell{Revenue} & 
         \makecell{Empirical\\ Regret} & \makecell{Requested\\ Max Regret} & \makecell{Max\\ Regret} & \makecell{Acceptance\\ Auctions}\\
         \midrule
         \texttt{RegretNet} & $2\times2$       & 0.93 (0.33) & 0.007 (0.007) & 0.025 &  0.032 & 100\% \\
         \ourmethod [ours] & $2\times2$   & 0.90 (0.31) & 0.005 (0.004) & 0.025 & \textbf{0.024} &  93.9\%\\
         \cmidrule(lr){1-7}
         \texttt{RegretNet} & $2\times3$       & 1.38 (0.37) & 0.007 (0.008) & 0.035 &  0.045 & 100\% \\
         \ourmethod [ours] & $2\times3$   & 1.34 (0.38) & 0.006 (0.005) & 0.035 & \textbf{0.033} &  95.2\%\\
         \cmidrule(lr){1-7}
         \texttt{RegretNet} & $3\times5$       & 2.91 (0.40) & 0.012 (0.012) & 0.055 &  0.073 & 100\% \\
         \ourmethod [ours] & $3\times5$   & 2.60 (0.39) & 0.010 (0.009) & 0.055 & \textbf{0.054} &  97.7\%\\
         &  &  &  &  &  
    \end{tabular}
    \label{tab:results}
\end{table*}

We now prove that any test auction accepted by the rule $\hat{\mathcal{T}}_{\alpha}$ satisfies Definition~\ref{def:statistical_SP}. Simply put, this result guarentees that by setting $\alpha = 0.01$ and specifying a maximal regret level of $\maxrgtlv = 0.025$, the actual regret of new auctions accepted by $\hat{\mathcal{T}}_{\alpha}$ will not exceed 0.025 with probability 99\%.

\begin{prop}
\label{thm:statistical_SP}
Suppose the calibration set $\mathcal{D}_\textup{cal} = \{(b_k, v_k)\}_{k=1}^K$ and a new auction $(b_\textup{test},v_\textup{test})$ are i.i.d samples from $P_{b,v}$. Let $\hat{\mathcal{T}}_{\alpha}$ be the acceptance rule of a given auction model ${f}_{\hat{\theta}}$ and regret estimator $r_{\hat{\theta}_r}$, obtained from Algorithm \ref{alg:regret_calibration}. Then, the auction ${f}_{{\hat{\theta}}}(b_\textup{test})$ accepted by the rule $\hat{\mathcal{T}}_{\alpha}$ in \eqref{eq:post_proccess_regret} satisfies Definition~\ref{def:statistical_SP} with regret level $\maxrgtlv$ and confidence level $1-\alpha$.
\end{prop}

\section{Experiments}
\label{sec:experiments}
\label{sec:exp_setup}
In this section, we empirically evaluate our proposal on auctions with configurations of (bidders $\times$ items) sets of size $2\times2$, $2\times3$, and $3\times5$. The training process is based on the \texttt{RegretNet} implementation by \citet{dutting2019optimal} (PyTorch version by \cite{stein2023neural}), which is used as the data-driven auction model $f_{\hat{\theta}}$. Throughout the experiments, we refer to our conformal method as \ourmethod. That is, we apply the acceptance rule $\hat{\mathcal{T}}_{\alpha}$ to the output of the auction model  ${f}_{\hat{\theta}}(b) = [{g}_{\hat{\theta}}(b), {p}_{\hat{\theta}} (b)]$. We fit our regret estimation model ${r}_{\hat{\theta}_r}$ within the shared backbone architecture setup. 
For further implementation details, please refer to Appendix \ref{sec:Architectural_and_Training_Details}.

The performance of our \ourmethod method is evaluated in terms of revenue, empirical regret, and maximum regret. Table \ref{tab:results} summarizes the results for different auction configurations. Following that table, we can see our method achieves competitive revenue compared to \texttt{RegretNet} across different auction settings while controlling the requested levels of maximum regret. Observe how the maximum regret values obtained by \texttt{RegretNet} are uncontrolled.


\begin{figure}[t]
  \centering
  \begin{minipage}[t]{0.48\textwidth}
    \centering
    \includegraphics[height=4.52cm]{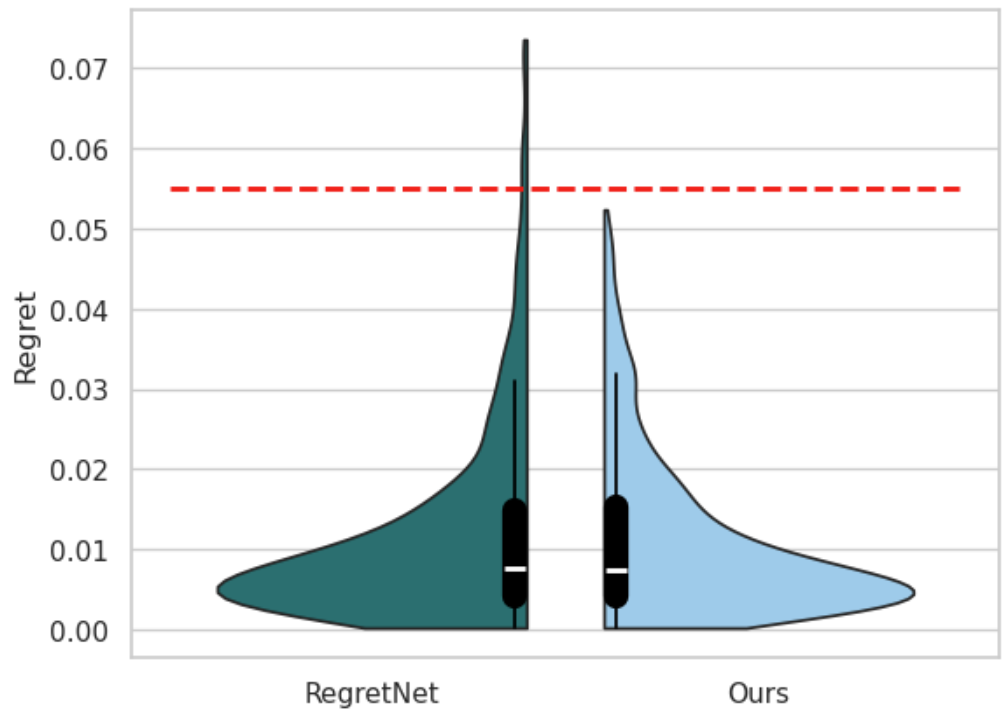}
    \captionof{figure}{Distribution of regret values for \texttt{RegretNet} and \ourmethod in the $2\times3$ auction setting. The test and calibration sets each contain 1,000 examples. The red dashed line indicates the requested regret level $\maxrgtlv$.}
    \label{fig:vp}
  \end{minipage}\hfill
  \begin{minipage}[t]{0.48\textwidth}
    \centering
    \includegraphics[height=4.52cm]{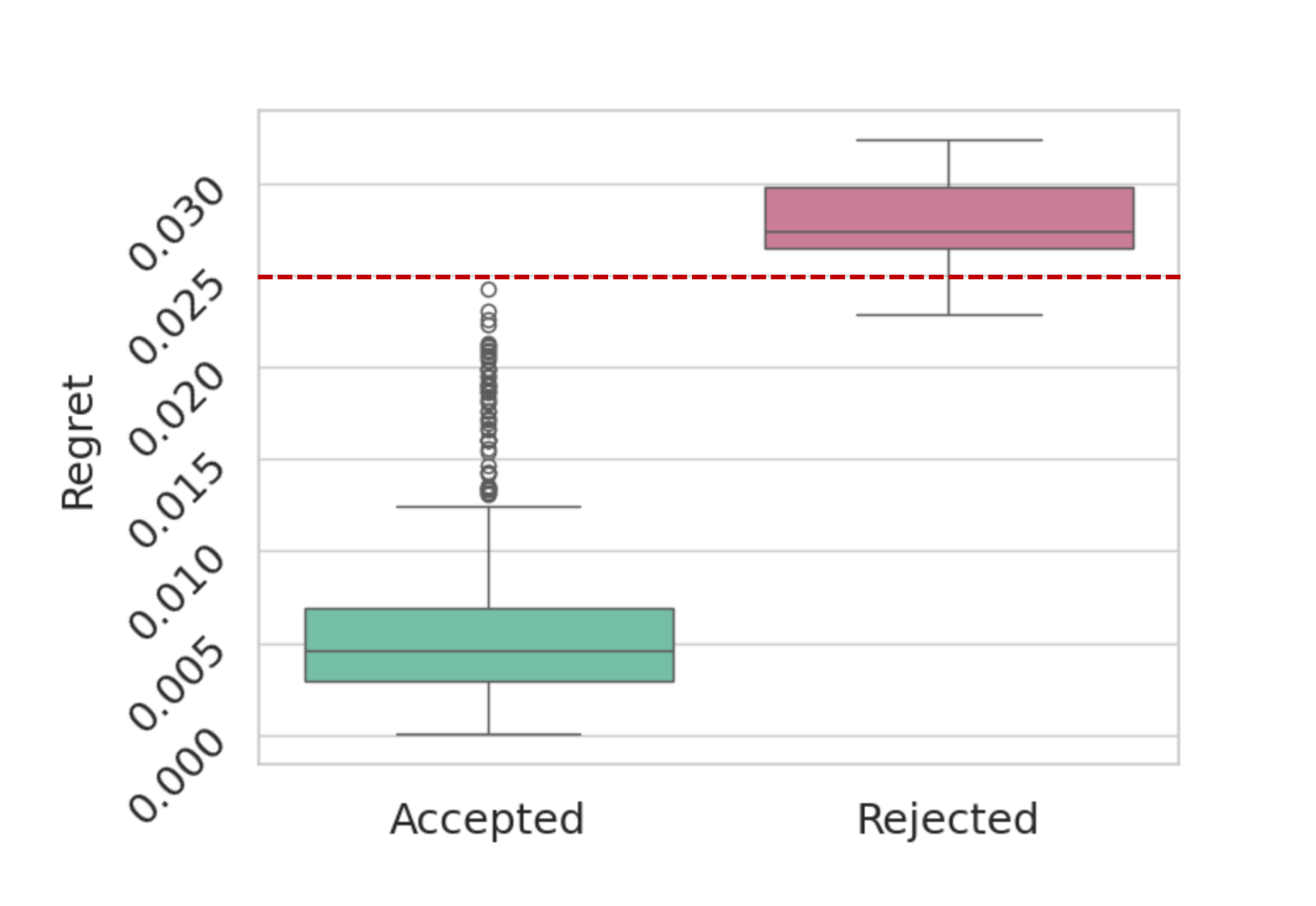}
    \captionof{figure}{Boxplot of the true regret values for \ourmethod in the $2\times3$ auction setting. The test and calibration sets each contain 1,000 examples. The red dashed line indicates the requested regret level $\maxrgtlv$.}
    \label{fig:bxp}
  \end{minipage}
\end{figure}



Table \ref{tab:results} also shows that our acceptance rule tends to accept most auctions at test time. Further, Figure \ref{fig:vp} illustrates that (i) the auctions rejected by our method are those that exceed the specified regret level; and (ii) the regret values of the accepted auctions fall below the maximal regret level $\maxrgtlv$. Consistently, Fig.~\ref{fig:bxp} shows rejected auctions are concentrated above $\maxrgtlv$, while accepted auctions are concentrated below it. Taken together, these results imply that \ourmethod controls regret, rejects high-regret auctions, and tends not to reject low-regret auctions---i.e., it accepts the auctions that should be accepted.

\begin{table*}[]
    \caption{Comparison of average revenue and strategy-proofness of various auction models in the $2\times2$ auction setting. The table includes our \ourmethod compared to classical mechanisms (\texttt{VCG}, \texttt{Item-Myerson}), transformer-based models (\texttt{AMenuNet}, \texttt{RegretFormer}), and other deep learning-based approaches (\texttt{RegretNet}, \texttt{EquivariantNet}, \texttt{ALGNet}). Strategy-proofness is indicated as either Yes, No, or Statistically. The results presented in this table are as reported by the authors for the same setup.}
    
    \centering
    \begin{tabular}{lcccc}
         \toprule
         \makecell{Auction Model} & 
         Revenue & 
         \makecell{Strategy-Proof}\\
         \midrule
         \texttt{VCG} \citep{vickrey1961counterspeculation}              & 0.28 & Yes \\
         \texttt{Item-Myerson} \citep{Myerson81}     & 0.42 & Yes \\
         \texttt{AMenuNet} \citep{duan2024scalable}    & 0.43 & Yes \\
         \texttt{RegretNet} \citep{dutting2019optimal}      & 0.93  & No \\
         \texttt{EquivariantNet} \citep{rahme2021permutation}  & 0.87  & No \\
         \texttt{ALGNet} \citep{rahme2021auction}         & 0.88  & No \\
         \texttt{RegretFormer} \citep{ivanov2022optimal}  & 0.94  & No \\
         \ourmethod \:(ours)  & 0.92 & Statistically \\
         
         &  &  &   &
    \end{tabular}
    \label{tab:results_rev}
\end{table*}

Lastly, Figure \ref{fig:rejected_items_vs_requested_regret_level} presents the effect of the maximal regret level $\maxrgtlv$ on the number of auctions rejected by $\hat{\mathcal{T}}_{\alpha}$. Following that figure, we can see that as the auctioneer enforces a lower max regret level $\maxrgtlv$, the number of rejected auctions increases as they fail to meet this  stricter criterion.

\begin{figure}[]
    \centering
    \includegraphics[height=4.52cm]{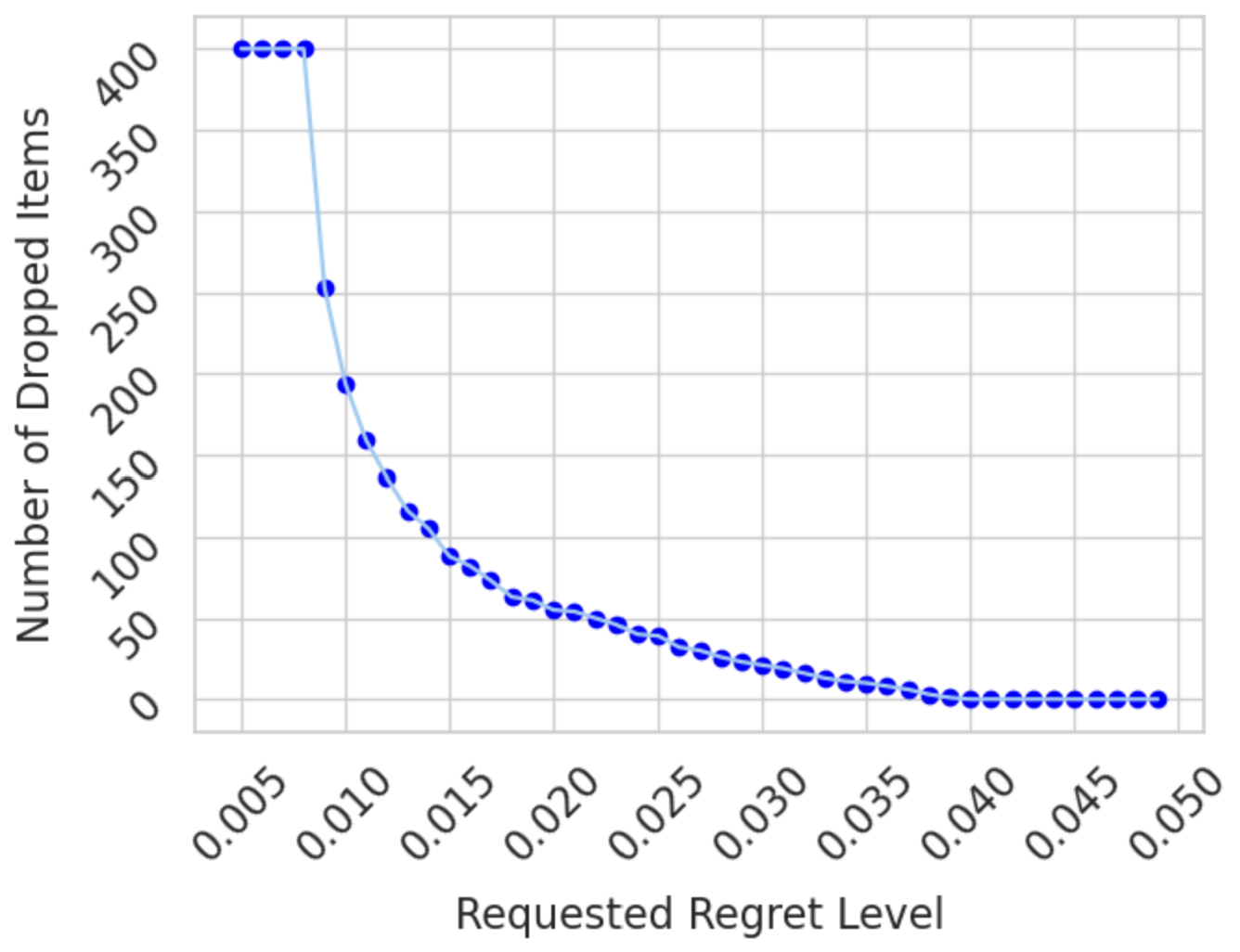}
    \caption{Effect of $\maxrgtlv$ on the number of auctions rejected by the calibrated acceptance rule $\hat{\mathcal{T}}_{\alpha}(b,\maxrgtlv)$ in the $2\times2$ auction setup. Evaluated on a test set of 400 examples.}
    \label{fig:rejected_items_vs_requested_regret_level}
\end{figure}

Overall, the results validate the effectiveness of our approach in designing auctions that balance revenue maximization with approximate strategy-proofness.

\section{Discussion}
\label{sec:discussion}

In this paper, we introduced the notion of statistical strategy-proof auctions, and  formulated a data-driven acceptance rule that selectively accepts auctions, at test time, whose regret is guaranteed to fall below a user-specified maximum regret value with high probability. The formulation of our data-driven acceptance rule is achieved based on predictions obtained by our novel regret estimation model, which is then calibrated by building on ideas from the conformal prediction literature. 
Numerical experiments showed that while the average regret of the vanilla \texttt{RegretNet} model is low, its maximum regret can be relatively high. By applying our proposed acceptance rule, the regret of accepted auctions consistently remained within the prescribed maximum regret level. Notably, the results demonstrated the effectiveness of our approach in filtering out high-regret auctions while retaining those with low regret. The experiments also demonstrated that our approach achieved notably higher revenue compared to baseline methods that enforce strictly strategy-proof auctions with zero regret, highlighting the practical relevance of our statistical approach.

The usefulness of our acceptance rule is affected by the predictive power of the underlying regret estimation model. If this model has poor predictive capabilities, our algorithm may make more rejections to ensure that the maximal regret will be controlled. Future research could benefit from exploring various learning schemes to predict the unknown regret more effectively.

A natural limitation of our method is the i.i.d.~assumption we make, which might be violated in practice. This may happen due to changes in the bidding and valuation distribution profiles, or due to imperfect calibration data that include noisy or inaccurate valuations, for instance.  As a future research direction, it would be important to quantify the effect of violation of this assumption on the resulting maximal regret level, possibly by taking inspiration from \cite{barber2023conformal}. It would be also intriguing to design calibration frameworks that are robust to such distribution shifts in the data.

Another worthwhile direction for future work is to extend our framework to simultaneously control the test-time revenue and regret. We believe this could be done by borrowing ideas from \citep{angelopoulos2021learn}, which presents a general calibration framework for controlling multiple risks. Such an approach would enable the auctioneer to ensure, with high probability, that the calibrated auction mechanism maintains low regret while also keeping the revenue of the acceptance auctions above a specified threshold. This multi-objective control is particularly relevant for large-scale, multi-item auctions, where revenue could potentially fall dramatically due to a large number of items.

\section*{Acknowledgements}                         
Y.R. and R.M.L. were supported by the ISRAEL SCIENCE FOUNDATION (grant No. 729/21). Y.R. thanks the Career Advancement Fellowship, Technion. I.T.C received funding from the European Research Council (ERC) under the European Union’s Horizon 2020 research and innovation program (grant agreement: 101077862, project name ALGOCONTRACT, PI: Inbal Talgam-Cohen).

\bibliography{ref}

\appendix
\appendix
\clearpage
\onecolumn 
\section{Additional Experiments}
This appendix provides additional experimental results and diagnostics that complement the main text. We report extended quantitative metrics—coverage of the statistical truthfulness constraint, realized regret relative to the requested level $\maxrgtlv$, and revenue—alongside robustness checks to calibration set size, model capacity, and distribution shift, and ablations of the components of \ourmethod. Unless stated otherwise, the data generation process, training protocol, and hyperparameters are identical to those in Sec.~\ref{sec:experiments}.

Table~\ref{tab:results_big} mirrors Table~\ref{tab:results} but evaluates on larger calibration and test sets (10,000 each). The pattern is unchanged: \ourmethod achieves revenue close to \texttt{RegretNet} while markedly lowering empirical regret and keeping test-time max regret at or below the requested level $\maxrgtlv$ in all settings. By contrast, \texttt{RegretNet} exceeds the requested max-regret level in each case. The increased sample size tightens the estimates and demonstrates that \ourmethod scales to large calibration and test sets while preserving coverage and revenue.

\begin{table*}[ht]
    \caption{Summary of the average revenue and empirical regret, with their standard deviations in parentheses, for both \texttt{RegretNet} and \ourmethod across $2\times2$, $2\times3$, and $3\times5$ auction settings. The test and the calibration sets are of size 10,000 examples. The `Requested Max Regret Level` column indicates the requested regret used in \ourmethod to control the regret at $99\%$ confidence ($\alpha = 0.01$). The `Accepted Auctions` column shows the percentage of accepted auctions by \ourmethod based on the regret threshold. the `Max Regret` column refers the max regret obtained at test time.\\}
    \centering
    \begin{tabular}{lcccccc}
        \toprule
         \makecell{Auction Model} & 
         \makecell{Auction Setup\\ Bidders × Items} &  
         \makecell{Revenue} & 
         \makecell{Empirical\\ Regret} & \makecell{Requested\\ Max Regret} & \makecell{Max\\ Regret} & \makecell{Acceptance\\ Auctions}\\
         \midrule
         \texttt{RegretNet} & $2\times2$       & 0.93 (0.33) & 0.007 (0.006) & 0.025 &  0.039 & 100\% \\
         \ourmethod [ours] & $2\times2$   & 0.92 (0.32) & 0.004 (0.003) & 0.025 & 0.020 &  89.23\%\\
         \cmidrule(lr){1-7}
         \texttt{RegretNet} & $2\times3$       & 1.41 (0.38) & 0.011 (0.014) & 0.035 &  0.164 & 100\% \\
         \ourmethod [ours] & $2\times3$   & 1.38 (0.39) & 0.007 (0.005) & 0.035 & 0.032 &  94.4\%\\
         \cmidrule(lr){1-7}
         \texttt{RegretNet} & $3\times5$       & 2.93 (0.40) & 0.023 (0.012) & 0.055 &  0.121 & 100\% \\
         \ourmethod [ours] & $3\times5$   & 2.67 (0.38) & 0.009 (0.007) & 0.055 & 0.055 &  93.4\%\\
         &  &  &  &  &  
    \end{tabular}
    \label{tab:results_big}
\end{table*}

To demonstrate the flexibility of our approach, we now conduct experiments in which we treat the original \texttt{RegretNet}  model as a black-box and form an acceptance rule by training a separate regret model. The results, summarized in Table~\ref{tab:results_blackbox}, show the validity of our approach and our ability to perform well even without having access to the auction model's parameters.

\begin{table}[ht]
    \caption{Summary of the average revenue and empirical regret (with standard deviations in parentheses) for \texttt{RegretNet}, \ourmethod-shared (\ref{fig:regretnet_with_score_model}), and \ourmethod-separate (\ref{fig:regretnet_with_sep_score_model}) in the black-box settings described in Section~\ref{sec:regret_estimation}. Each model’s requested and actual maximum regret is reported, along with the percentage of auctions accepted under the requested regret threshold. Test and calibration sets each contain 10,000 examples.}
    
    \centering
    \begin{tabular}{lcccccc}
        \toprule
         \makecell{Auction Model} & 
         \makecell{Auction Setup\\ Bidders × Items} & 
         \makecell{Revenue} & 
         \makecell{Empirical\\ Regret} & \makecell{Requested\\ Max Regret} & \makecell{Max\\ Regret} & \makecell{Acceptance\\ Auctions}\\
         \midrule
         \texttt{RegretNet} & $2\times2$       & 0.93 (0.33) & 0.007 (0.006) & 0.025 &  0.039 & 100\% \\
         \ourmethod-shared (\ref{fig:regretnet_with_score_model}) &  $2\times2$   & 0.92 (0.32) & 0.004 (0.003) & 0.025 & 0.020 &  89.23\%\\
        \ourmethod-separate (\ref{fig:regretnet_with_sep_score_model}) &  $2\times2$  & 0.92 (0.31) & 0.004 (0.005) & 0.025 & 0.023 &  93.7\%\\
         &  &  &  &  & 
    \end{tabular}
    \label{tab:results_blackbox}
\end{table}


Table~\ref{tab:results_rev_full} mirrors Table~\ref{tab:results_rev} and expands it with additional bidder–item configurations, showing that \ourmethod scales cleanly with both the number of bidders and items. Across all settings, \ourmethod attains revenue that is competitive with the strongest learning-based baselines. Unlike these high-revenue DNN baselines, which are not strategy-proof, \ourmethod preserves statistical strategy-proofness at test time. Classical truthful mechanisms provide strict strategy-proofness but generally trail the top non-truthful models in revenue for larger setups. Overall, the extended results confirm a favorable revenue–truthfulness trade-off: \ourmethod maintains near–state-of-the-art revenue while delivering test-time guarantees and scaling gracefully as the auction size grows.

\begin{table}[H]
    \caption{Comparison of average revenue and strategy-proofness of various auction models in different auction settings. The table includes our \ourmethod compared to classical mechanisms (\texttt{VCG}, \texttt{Item-Myerson}), transformer-based models (\texttt{AMenuNet}, \texttt{RegretFormer}), and other deep learning-based approaches (\texttt{RegretNet}, \texttt{EquivariantNet}, \texttt{ALGNet}). Strategy-proofness is indicated as either Yes, No, or Statistically. The results presented in this table are as reported by the authors for the same setups.}
    
    \centering
    \begin{tabular}{lccccc}
    \toprule
    \makecell{Auction Model} &
    \makecell{1x2} &
    \makecell{2x2} &
    \makecell{2x3} &
    \makecell{3x10} &
    \makecell{Strategy-Proof} \\
    \midrule
    \texttt{VCG}~\citep{vickrey1961counterspeculation} & 0.00 & 0.28 & 1.00 & 5.02 & Yes \\
    \texttt{Item-Myerson}~\citep{Myerson81} & 0.14 & 0.42 & 1.25 & 5.31 & Yes \\
    \texttt{AMenuNet}~\citep{duan2024scalable} & 0.17 & 0.43 & 1.30 & 5.58 & Yes \\
    \texttt{RegretNet}~\citep{dutting2019optimal} & 0.57 & 0.93 & 1.40 & 5.59 & No \\
    \texttt{EquivariantNet}~\citep{rahme2021permutation} & 0.58 & 0.87 & 1.42 & 5.74 & No \\
    \texttt{ALGNet}~\citep{rahme2021auction} & 0.55 & 0.88 & 1.44 & 5.52 & No \\
    \texttt{RegretFormer}~\citep{ivanov2022optimal} & 0.57 & 0.94 & 1.48 & 6.12 & No \\
    \ourmethod~(ours) & 0.57 & 0.92 & 1.39 & 5.52 & Statistically \\
    \bottomrule
    
    \end{tabular}
\label{tab:results_rev_full}
\end{table}

Figure \ref{fig:joint_dist_regret_regrethat} presents the actual regret values versus the predicted regret values from our regret estimation model ${r}_{\hat{\theta}_r}$. The strong correlation observed between them suggests that the model predicts regret with reasonable accuracy. This aligns with the behavior seen in Figure~\ref{fig:vp}, where our method consistently rejects auctions with high regret.

\begin{figure}[H]
    \centering
    \includegraphics[width=0.47\textwidth]{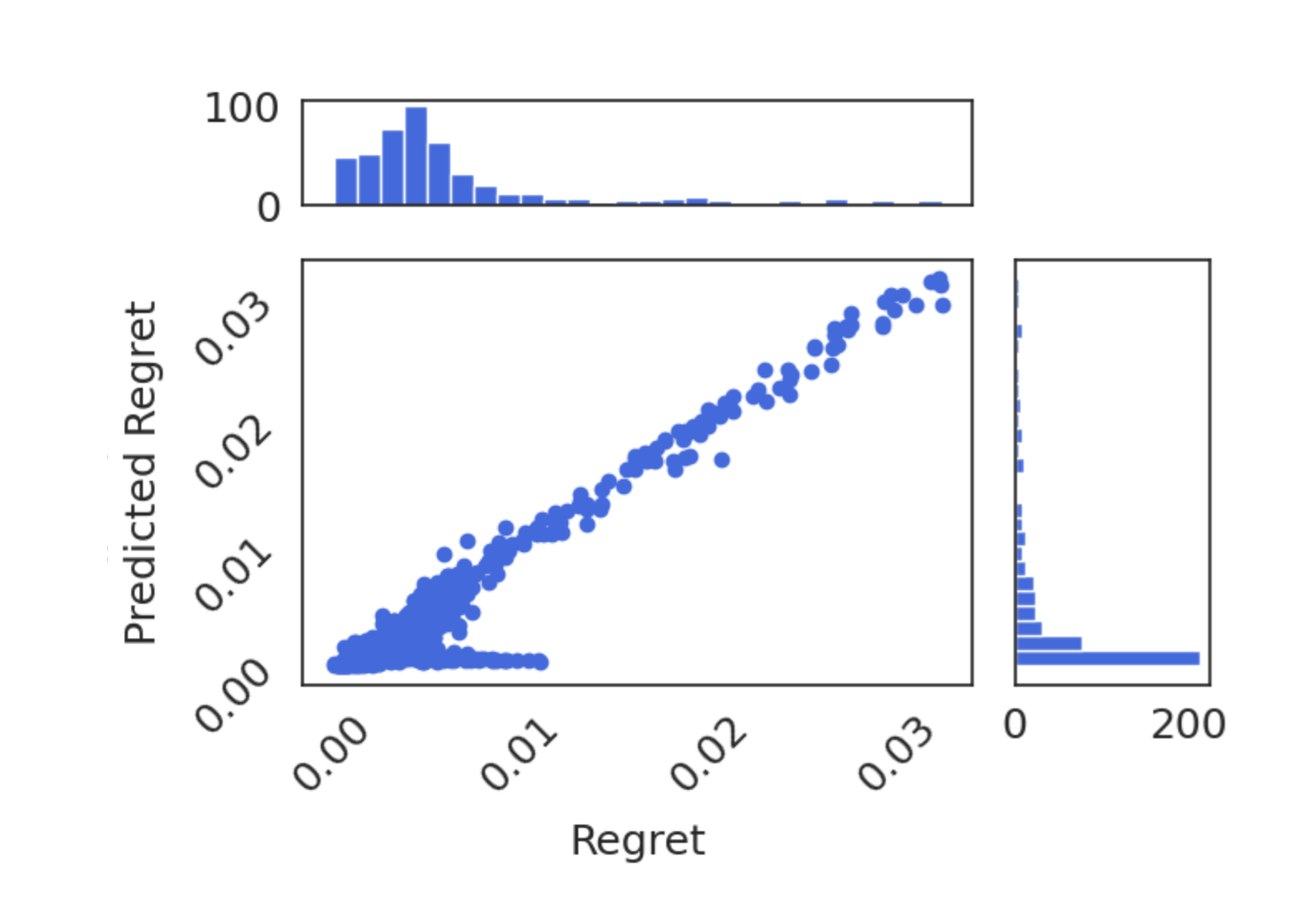}
    \caption{Scatter plot of the actual regret $\rgt_{\hat{\theta}}(v;b)$ and the predicted regret ${r}_{\hat \theta_r}(b)$ of \ourmethod, evaluated on test data of size $1,000$; the auction setup is $2\times2$.}
    \label{fig:joint_dist_regret_regrethat}
\end{figure}

\section{Mathematical Proofs}
\begin{proof}
\emph{of Theorem~\ref{thm:statistical_SP}}. To lighten notations, denote by $r$ the actual maximum regret $\rgt^\textup{max}_{\hat{\theta}}(v_\textup{test};b_\textup{test})$ and by $\hat{r}$ the predicted maximum regret ${r}_{\hat{\theta}_r}^\textup{max}(b_\textup{test})$. With these notations in place, the probability of interest is as follows:
\begin{equation*}
    \begin{aligned}
&\mathbb{P}\left[(r > \maxrgtlv) \:\cap\:  (\hat{\mathcal{T}}_{\alpha}(b_{\textup{test}},\maxrgtlv)\neq A_0)\right]\\ 
    &=\:\mathbb{P}\left[(r > \maxrgtlv) \:\cap\: (\hat{r}<\maxrgtlv-\hat{q}_{\alpha})\right] &\textit{\textcolor{gray}{by definition of \,$\hat{\mathcal{T}}_{\alpha}$}}\\
    &=\:\mathbb{P}\left[(r > \maxrgtlv) \:\cap\: (\maxrgtlv>\hat{r}+\hat{q}_{\alpha})\right] \\
    &=\:\mathbb{P}\left[r > \hat{r}+\hat{q}_{\alpha}\right] 
    &\textit{\textcolor{gray}{intersecting events}}\\
    &=\:\mathbb{P}\left[r -\hat{r} > \hat{q}_{\alpha}\right] \\
    &=\:1-\mathbb{P}\left[r -\hat{r} \leq \hat{q}_{\alpha}\right] 
    &\textit{\textcolor{gray}{complement event}}\\
    &\geq\:1-(1-\alpha) \\
    &=\:\alpha,
    \end{aligned}
\end{equation*}
where the last inequality holds due  \citep[][Lemma 2]{romano2019conformalized}.
\end{proof}

\section{Experiment setup and environment}
The experiments were conducted on a high-performance computing setup consisting of an Ubuntu 20.04.6 LTS operating system, powered by 96 Intel(R) Xeon(R) Gold CPUs running at 2.40 GHz. The machine was equipped with 16 Nvidia A40 GPUs and 512 GB of RAM. The software environment was based on PyTorch 2.1, running under Python 3.11.5.

\section{Architectural and Training Details}
\label{sec:Architectural_and_Training_Details}

The architecture of our models includes ReLU activations for both the allocation and payment networks, as well as the regret estimation model if exists. To ensure that each item's allocations sum to one, we apply a softmax activation to the allocation network. For the payment network, a sigmoid activation function is used to constrain the payments to a fraction between zero and one, enforcing individual rationality (IR).

Models are trained on 700,000 sample valuation and bids profiles $v_{i,j}\sim U[0,1]$, $b_{i,j}\sim U[0,v_{i,j}]$, divided into batches, using the Adam optimizer with an initial learning rate of $1e^{-3}$. The training process spans several epochs, with the exact number described in Table \ref{tab:app-hyperparams}.

An additional term for regret minimization is incorporated into the training process, similar to the augmented Lagrangian method used in \texttt{RegretNet}. The weight for this regret term is periodically updated to adjust the emphasis on regret minimization in the loss function. Furthermore, the value of the parameter $\rho$, which influences the regret term, is also periodically updated during training.

The agents' regret is optimized in an inner loop, with specific parameters detailed in the last three rows of Table \ref{tab:app-hyperparams}. For a comprehensive understanding of the loss function derivation and further details on the training procedure, we refer the reader to the original RegretNet paper by \citet{dutting2019optimal}.

\begin{table}[H]
\centering
\caption{\textbf{Hyperparameters.} We provide training hyperparameters for all auction sizes we study.\\}
\label{tab:app-hyperparams}
\begin{tabular}{lccc}
\toprule
Auction setup (Bidders x Items) & 2$\times$2 & 2$\times$3 & 3$\times$5 \\
\midrule
Epochs                                & 50   & 50   & 50   \\
Train Batch Size                      & 512  & 1024  & 2048  \\
Initial Learning Rate                 & 0.001 & 0.005 & 0.01 \\
Number of Hidden Layers               & 5    & 5    & 5    \\
Hidden Layer size                     & 100  & 100  & 100  \\
$\rho$ Update Period (Epochs)         & 2    & 2    & 2    \\
Lagrange Weight Update Period (Iters) & 100  & 100  & 100  \\
Initial $\rho$                        & 1    & 1    & 1    \\
$\rho$ Increment                      & 1   & 5    & 8    \\
Initial Lagrange Weight               & 5    & 5    & 5    \\
Misreport Learning Rate (Training)    & 0.1  & 0.1  & 0.1  \\
Misreport Iterations (Training)       & 25   & 25   & 25   \\
Misreport Initializations (Training)  & 10   & 10   & 10   \\
\bottomrule
\end{tabular}
\end{table}

\section{Regret Prediction Model Illustration}
\label{asec:regret_model_illustration}
\begin{figure}[H]
    \centering
    \begin{subfigure}[b]{0.45\textwidth}
        \centering
        \includegraphics[width=\textwidth]{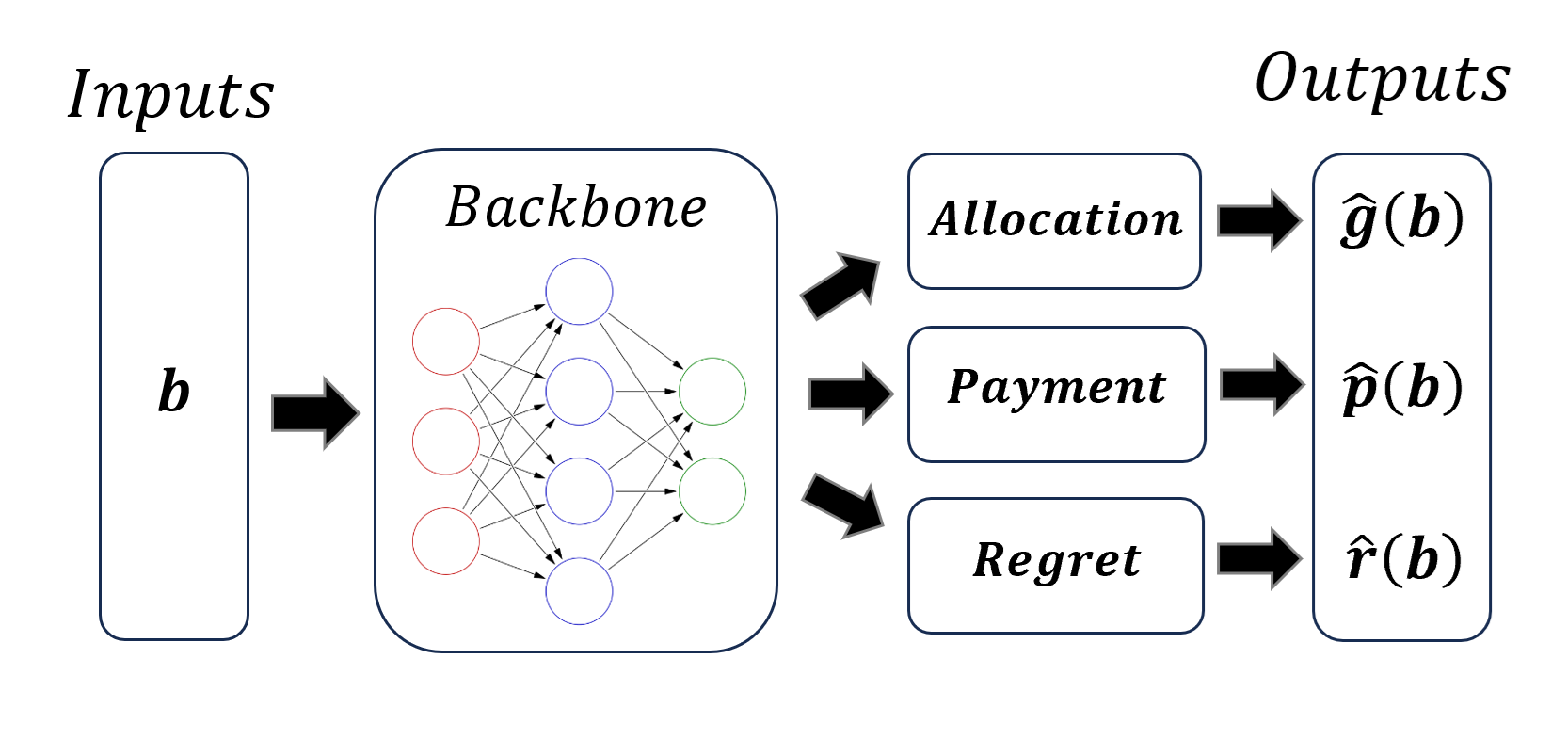}
        \caption{Shared backbone regret estimation.}
        \label{fig:regretnet_with_score_model}
    \end{subfigure}
    \hfill
    \begin{subfigure}[b]{0.45\textwidth}
        \centering
        \includegraphics[width=\textwidth]{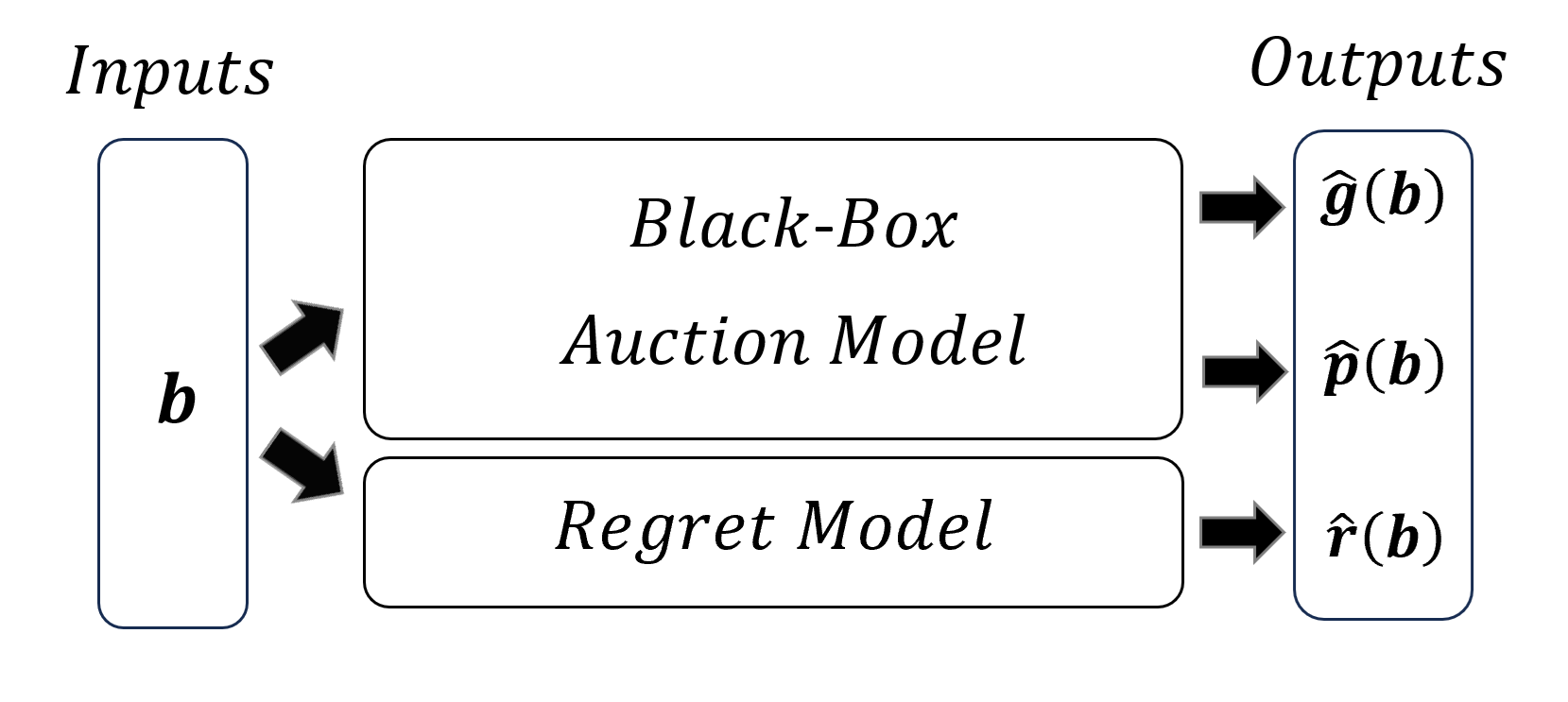}
        \caption{Black-box auction model regret estimation.}
        \label{fig:regretnet_with_sep_score_model}
    \end{subfigure}
    \caption{Comparison of regret estimation architectures in auction models.}
    \label{fig:regretnet_models_combined}
\end{figure}

\end{document}